%% file: main.tex
\documentclass{article}

\usepackage{neurips}

\usepackage[utf8]{inputenc}
\usepackage[T1]{fontenc}
\usepackage{hyperref}
\usepackage{url}
\usepackage{amsmath, amssymb, amsthm}
\usepackage{enumitem}
\usepackage{graphicx}
\usepackage{booktabs}

\DeclareMathOperator*{\argmax}{arg\,max}
\newcommand{\VoI}{\operatorname{VoI}}
\newcommand{\bpost}{\hat{b}}
\newcommand{\E}{\mathbb{E}}
\newcommand{\R}{\mathbb{R}}
\newcommand{\Dg}{D_g}
\newcommand{\Dh}{D_h}

\newtheorem{theorem}{Theorem}
\newtheorem{definition}[theorem]{Definition}
\newtheorem{proposition}[theorem]{Proposition}
\newtheorem{corollary}[theorem]{Corollary}
\newtheorem{remark}[theorem]{Remark}
\newtheorem{theoremA}{Theorem}

\title{All Substitution Is Local}

\author{
    Nidhish Shah \thanks{Equal contribution.} \\
    Independent Researcher \\
    \texttt{shah@nidhish.dev} \\
\And
    Shaurjya Mandal \footnotemark[1] \\
    Harvard University \\
    \texttt{smandal2@mgh.harvard.edu} \\
\And
    Asfandyar Azhar \\
    University of Oxford \\
    \texttt{asfandyar.azhar@st-annes.ox.ac.uk} \\
}

\begin{document}

\maketitle

\begin{abstract}
When does consulting one information source raise the value of another, and when does it diminish it? We study this question for Bayesian decision-makers facing finite actions. The interaction decomposes into two opposing forces: a complement force, measuring how one source moves beliefs to where the other becomes more useful, and a substitute force, measuring how much the current decision is resolved. Their balance obeys a localization principle: substitution requires an observation to cross a decision boundary, though crossing alone does not guarantee it. Whenever posteriors remain inside the current decision region, the substitute force vanishes and sources are guaranteed to complement each other, even when one source cannot, on its own, change the decision. The results hold for arbitrarily correlated sources and are formalized in Lean 4.\footnote{Code and proofs: \url{https://github.com/nidhishs/all-substitution-is-local}.} Substitution is confined to the thin boundaries where decisions change. Everywhere else, information cooperates.
\end{abstract}

\input{chapters/introduction}
\input{chapters/preliminaries}
\input{chapters/methodology}
\input{chapters/illustration}
\input{chapters/related}
\input{chapters/conclusion}

\bibliographystyle{plainnat}
\bibliography{references}

\clearpage
\input{chapters/appendix}

\end{document}

%% file: chapters/introduction.tex
\section{Introduction}
\label{sec:intro}

When an agent can consult multiple information sources before making a decision, a natural question is: does consulting one source make another more or less valuable?
Sources that make each other more valuable are \emph{complements}; sources that diminish each other's value are \emph{substitutes}.

\citet{ChenWaggoner2017} formalized this distinction via the pointwise interaction $\Delta\VoI(j \mid i, b)$: the change in channel~$j$'s value of information at belief~$b$ after observing channel~$i$.
Their work established global conditions (across all beliefs) for substitution and complementarity. Our contribution is to localize these conditions: we characterize \emph{which beliefs} give rise to complementarity and which to substitution, for a fixed pair of channels and a fixed decision problem.
Concretely, we establish three results:
\begin{enumerate}
  \item \textbf{Bregman decomposition} (Proposition~\ref{prop:decomposition}): $\Delta\VoI(j \mid i, b)$ decomposes into a non-negative complement force $\E[D_g(\bpost, b)]$ minus a non-negative substitute force $\E[D_h(\bpost, b)]$, where $D_g$ and $D_h$ are Bregman divergences of auxiliary convex functions.
  \item \textbf{Interior complementarity} (Theorem~\ref{thm:interior}): If all of channel~$i$'s posteriors lie within the current decision region, then $\Delta\VoI(j \mid i, b) \geq 0$.
  \item \textbf{Converse} (Theorem~\ref{thm:converse}): If $\Delta\VoI(j \mid i, b) < 0$, at least one posterior of channel~$i$ lies in a different decision region from~$b$.
\end{enumerate}

Boundary crossing is necessary for substitution but not sufficient: channels can cross decision boundaries and still complement each other. Section~\ref{sec:setup} formalizes the setting; Section~\ref{sec:localization} presents the main results; Section~\ref{sec:example} illustrates them on a concrete example.

\begin{figure}[htbp]
\centering
\includegraphics[width=\textwidth]{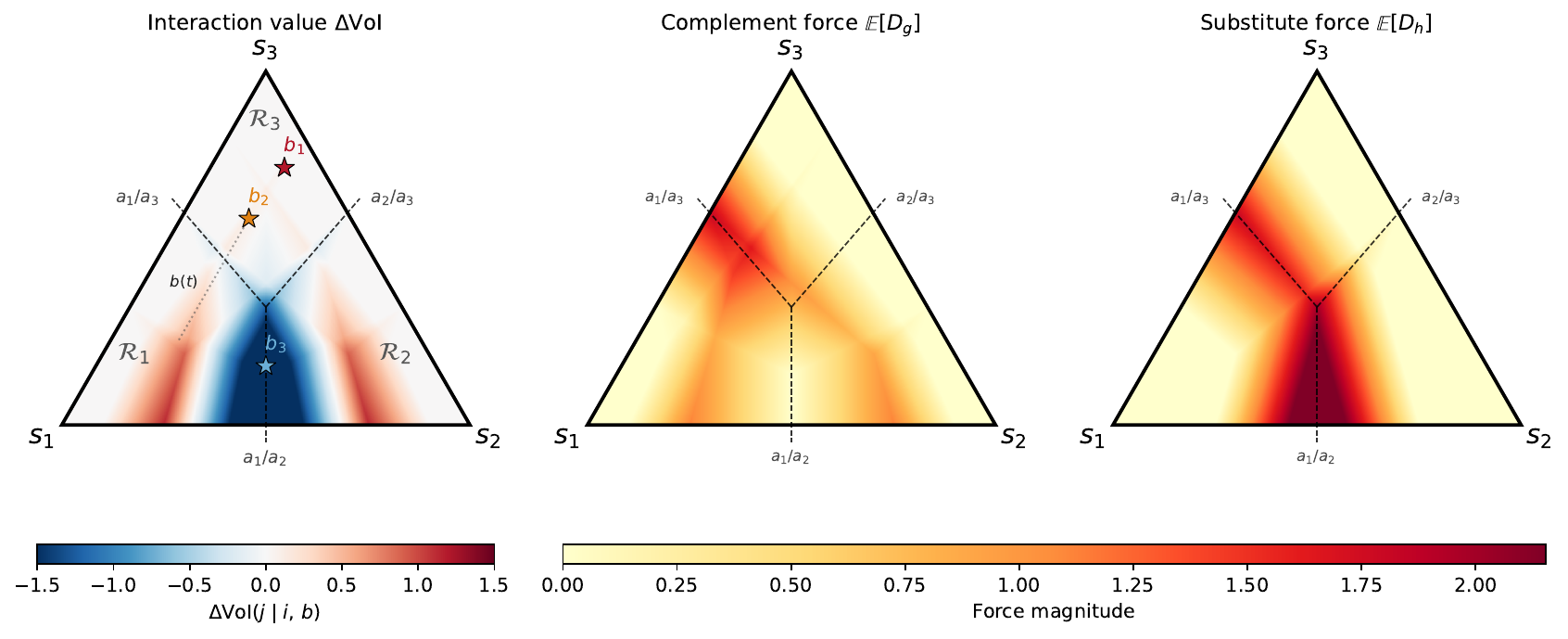}
\caption{Belief-simplex geometry of the interaction.
\textit{Left}: the sign of $\Delta\VoI$ partitions the simplex into complement and substitute regions.
\textit{Middle, Right}: the two constituent forces; the complement force is diffuse across the interior while the substitute force concentrates sharply along decision boundaries. Their difference determines the sign of $\Delta\VoI$.}
\label{fig:hero}
\end{figure}

%% file: chapters/preliminaries.tex
\section{Preliminaries}
\label{sec:setup}

We consider a finite-action Bayesian decision problem and define the second-order interaction between information channels that is the object of study.

\paragraph{Decision problem.}
There are $K$ states $S = \{s_1, \ldots, s_K\}$, a finite action set $A$, and a reward function $R : A \times S \to \R$.
A \emph{belief} $b \in \Delta^{K-1}$ is a probability distribution over states.
The optimal expected reward at belief $b$ is
\[
  V(b) = \max_{a \in A} \sum_s R(a, s)\, b(s),
\]
the upper envelope of $|A|$ linear functions.
$V$ is piecewise-linear and convex (PWLC).
Each \emph{decision region} $\mathcal{R}_a = \{b : a \in \argmax_{a'} \sum_s R(a', s)\, b(s)\}$ is a convex polytope; their interiors partition $\Delta^{K-1}$.
Within region $\mathcal{R}_a$, the value function is linear: $V(b) = r_a \cdot b$ where $r_a = (R(a, s_1), \ldots, R(a, s_K))$.

\paragraph{Channels.}
A \emph{channel} $i$ has outcome set $O_i$ and likelihood kernel $P(\cdot \mid s, i)$.
The \emph{value of information} (VoI) of channel $i$ at belief $b$ is
\[
  \VoI(i \mid b) = \E_{o \sim P(\cdot \mid b, i)}\bigl[V(\bpost(i, o))\bigr] - V(b),
\]
where $\bpost(i, o) \propto P(o \mid s, i)\, b(s)$ is the posterior after outcome $o$.
Channels $i$ and $j$ are \emph{conditionally independent} given the state. This ensures that Definition~\ref{def:delta} captures the actual interaction; the mathematical results hold without this assumption.

\begin{definition}[Second-order interaction]\label{def:delta}
The \emph{interaction} between channels $i$ and $j$ at belief $b$ is
\[
  \Delta\VoI(j \mid i, b)
  \;=\;
  \E_{o_i}\bigl[\VoI(j \mid \bpost(i, o_i))\bigr]
  \;-\;
  \VoI(j \mid b).
\]
When $\Delta\VoI > 0$, observing channel~$i$ \emph{increases} the value of subsequently observing channel~$j$ (complements); when $\Delta\VoI < 0$, it decreases it (substitutes).
\end{definition}

\paragraph{Auxiliary functions.}
Define:
\begin{align}
  g(b) &= \E_{o_j}\bigl[V(\bpost(j, o_j))\bigr], \label{eq:g} \\
  h(b) &= V(b). \label{eq:h}
\end{align}
Both $g$ and $h$ are convex: $g$ is an expectation of the convex function $V$, and $h = V$ is a maximum of linear functions (PWLC).
Note that $\VoI(j \mid b) = g(b) - h(b)$ by definition.
These two functions are the building blocks of the Bregman decomposition in Proposition~\ref{prop:decomposition}.

\paragraph{Bregman divergence.}
The Bregman divergence of a convex function $\phi$ is
$D_\phi(p, q) = \phi(p) - \phi(q) - \langle \xi, p - q\rangle$,
where $\xi \in \partial\phi(q)$ is any subgradient.

\paragraph{Martingale identity.}
For any convex $\phi$,
\begin{equation}
  \label{eq:bregman-jensen}
  \E[\phi(\bpost)] - \phi(b)
  \;=\;
  \E\bigl[D_\phi(\bpost, b)\bigr],
\end{equation}
since $\E[\bpost] = b$ (Bayesian martingale) causes the linear correction term $\langle \nabla\phi(b),\, \E[\bpost] - b \rangle$ to vanish.
At points where $\phi$ is non-differentiable (e.g., kinks of $h = V$ at decision boundaries), the identity holds for any choice of subgradient $\xi \in \partial\phi(b)$, since the correction $\langle \xi, \E[\bpost] - b \rangle = 0$ regardless.

%% file: chapters/methodology.tex
\section{The Localization Principle}
\label{sec:localization}

\paragraph{Two Opposing Forces.}

\begin{proposition}[Bregman decomposition]
\label{prop:decomposition}
\[
  \Delta\VoI(j \mid i, b)
  \;=\;
  \E\bigl[\Dg(\bpost, b)\bigr]
  \;-\;
  \E\bigl[\Dh(\bpost, b)\bigr],
\]
where $\bpost = \bpost(i, o)$, the functions $g$ and $h$ are defined in \eqref{eq:g}--\eqref{eq:h}, and both expected Bregman divergences are non-negative.
\end{proposition}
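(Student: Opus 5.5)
We obtain the decomposition by rewriting both terms of Definition~\ref{def:delta} through the auxiliary functions and then applying the martingale identity \eqref{eq:bregman-jensen} to each separately. Since $\VoI(j \mid b') = g(b') - h(b')$ holds at every belief $b'$, and in particular at each posterior $\bpost = \bpost(i,o_i)$, linearity of expectation gives
\[
  \Delta\VoI(j \mid i, b)
  = \bigl(\E[g(\bpost)] - g(b)\bigr) - \bigl(\E[h(\bpost)] - h(b)\bigr),
\]
where the expectation is over $o_i \sim P(\cdot \mid b, i)$. This step is purely algebraic. Because $g$ and $h$ are defined pointwise on the simplex, it does not use conditional independence of the channels.

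Next we apply \eqref{eq:bregman-jensen} with $\phi = g$ and with $\phi = h$. For this we need convexity of both functions and the Bayesian martingale property $\E[\bpost] = b$. The martingale property follows from Bayes' rule: $\sum_o P(o \mid b,i)\,\bpost(i,o)(s) = \sum_o P(o\mid s,i)\,b(s) = b(s)$. Convexity of $h = V$ holds because $V$ is a maximum of linear functions.

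Convexity of $g$ is the one point that deserves care. The preliminaries justify it as ``an expectation of the convex function $V$'', but the posterior map $b \mapsto \bpost(j,o_j)$ is not linear in $b$, so that argument is incomplete as stated. We would fix it by writing
\[
  g(b) = \sum_{o_j} P(o_j \mid b, j)\, V(\bpost(j,o_j)) = \sum_{o_j} \max_a \sum_s R(a,s)\, P(o_j \mid s, j)\, b(s).
\]
Here the normalizing factor cancels against the denominator of the posterior, using positive homogeneity of $V$ extended to the cone. Each summand is then a maximum of linear functions of $b$, so $g$ is PWLC and in particular convex. Outcomes with $P(o_j \mid b,j)=0$ contribute zero and cause no trouble.

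Both $g$ and $h$ may be non-differentiable at $b$, so we fix an arbitrary subgradient $\xi_g \in \partial g(b)$ and $\xi_h \in \partial h(b)$. Such subgradients exist because $g$ and $h$ are finite convex functions given as maxima of finitely many linear pieces, so on the simplex the subdifferential at any point contains the gradient of an active piece. By the remark after \eqref{eq:bregman-jensen}, the linear correction $\langle \xi, \E[\bpost]-b\rangle$ vanishes for either choice. Substituting gives $\E[g(\bpost)] - g(b) = \E[\Dg(\bpost,b)]$ and $\E[h(\bpost)]-h(b) = \E[\Dh(\bpost,b)]$, which is the claimed identity.

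Non-negativity follows pointwise. For each outcome, $D_\phi(\bpost,b) = \phi(\bpost) - \phi(b) - \langle \xi, \bpost - b\rangle \ge 0$ by the subgradient inequality, so each expectation is non-negative. Equivalently, each term equals a Jensen gap of a convex function.

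The only real obstacle is the convexity of $g$ just discussed; everything else is bookkeeping.
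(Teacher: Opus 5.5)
Your proposal is correct and follows the paper's proof. Like the paper, you rewrite $\Delta\VoI(j \mid i, b)$ as the Jensen gap of $g$ minus the Jensen gap of $h$, then apply the martingale identity \eqref{eq:bregman-jensen} to each term. Your extra step proving convexity of $g$ through the cancellation $P(o_j \mid b, j)\, V(\bpost(j,o_j)) = \max_a \sum_s R(a,s)\,P(o_j \mid s, j)\, b(s)$ correctly repairs the main text's one-line justification, and it is the same computation the paper uses in Appendix~\ref{app:polyhedral}.
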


\begin{proof}
Since $\VoI(j \mid b') = g(b') - h(b')$ by \eqref{eq:g}--\eqref{eq:h}, expanding Definition~\ref{def:delta} gives
\[
  \Delta\VoI(j \mid i, b)
  \;=\;
  \bigl[\E[g(\bpost)] - g(b)\bigr]
  \;-\;
  \bigl[\E[h(\bpost)] - h(b)\bigr].
\]
By the martingale identity~\eqref{eq:bregman-jensen}, each bracketed expression equals the corresponding expected Bregman divergence.
Non-negativity of $\E[\Dg(\bpost, b)]$ follows from convexity of $g$, and non-negativity of $\E[\Dh(\bpost, b)]$ from convexity of $h$, each by Jensen's inequality.
\end{proof}

The decomposition reveals a tug-of-war: the term $\E[\Dh(\bpost, b)]$ is the \emph{substitute force}, measuring how much channel~$i$ resolves the current decision by generating posteriors that straddle a decision boundary, while $\E[\Dg(\bpost, b)]$ is the \emph{complement force}, measuring how much channel~$i$ enhances the future relevance of channel~$j$.
Under conditional independence, $g(\bpost(i, o_i))$ equals the actual expected post-$j$ value $\E_{o_j \mid o_i}[V(\bpost(i,j,o_i,o_j))]$, so Definition~\ref{def:delta} captures the real interaction and the forces have direct economic content. Without conditional independence, the algebra and localization theorems still hold, but the individual forces may not reflect the true conditional distributions.

\begin{corollary}[Independence from channel ordering and costs]
\label{cor:cost}
The interaction $\Delta\VoI(j \mid i, b)$ is determined entirely by the two likelihood kernels and the reward structure; it does not depend on any separate per-use costs that might be attached to channels $i$ or $j$.
\end{corollary}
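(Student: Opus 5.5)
The corollary follows from tracing dependencies through Definition~\ref{def:delta} and the decomposition of Proposition~\ref{prop:decomposition}. The plan is to write $\Delta\VoI(j \mid i, b)$ as an explicit function of the data $(b, P(\cdot\mid\cdot,i), P(\cdot\mid\cdot,j), R)$. Then check that no cost parameter, and no ordering decision beyond the fixed roles of $i$ and $j$ in the definition, enters anywhere.

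\textbf{Step 1: the value function.} $V(b') = \max_{a} \sum_s R(a,s)\,b'(s)$ depends only on $R$ and the belief $b'$. Hence $h = V$ depends only on $R$.

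\textbf{Step 2: the posteriors and $g$.} For any channel $k \in \{i,j\}$, the outcome distribution is $P(o \mid b', k) = \sum_s P(o\mid s,k)\,b'(s)$. The posterior $\bpost(k,o)$ is obtained by Bayes' rule from $P(o\mid s,k)$ and $b'$. Both are therefore functions of the likelihood kernel of $k$ and the belief alone. Consequently $g(b') = \E_{o_j}[V(\bpost(j,o_j))]$ is determined by $P(\cdot\mid\cdot,j)$ and $R$.

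\textbf{Step 3: assembling the interaction.} By Proposition~\ref{prop:decomposition},
\[
  \Delta\VoI(j \mid i, b) = \bigl[\E[g(\bpost)] - g(b)\bigr] - \bigl[\E[h(\bpost)] - h(b)\bigr],
\]
where the outer expectation is over $o \sim P(\cdot\mid b,i)$ with $\bpost = \bpost(i,o)$. This involves only the kernel of $i$, the functions $g, h$ from Steps 1--2, and $b$. The $\VoI$ terms in Definition~\ref{def:delta} are gross values of information. A per-use cost $c_i$ or $c_j$, if attached, would be an additive constant to any net value. For example, net $\VoI(j\mid b') - c_j$ has the constant $c_j$ appearing in both terms of the difference, so it cancels, while $c_i$ never enters. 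Hence the interaction is cost-free as claimed.

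The only point needing care is interpreting ``independence from costs.'' I will state explicitly that costs are modeled as additive constants subtracted from gross value, independent of the belief. Under that model the cancellation in Step 3 is immediate. The remaining steps are purely definitional.
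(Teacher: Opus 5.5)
Your proposal is correct and follows the paper's argument. The paper notes that no cost terms appear in Definition~\ref{def:delta} and that an additive constant drops out, phrased as $D_{\phi+c}(p,q) = D_\phi(p,q)$ for $\phi \in \{g,h\}$; your cancellation of $c_j$ across the two terms (using $\E_{o_i}[c_j] = c_j$) is the same fact applied before the Bregman decomposition rather than after it.
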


\begin{proof}
No cost terms appear in Definition~\ref{def:delta}.
Equivalently, adding a constant to $g$ or $h$ does not change the Bregman divergence, since $D_{\phi+c}(p,q) = D_\phi(p,q)$ for any constant~$c$.
\end{proof}

\paragraph{When Posteriors Stay Inside The Decision Region.}

\begin{theorem}[Interior complementarity]
\label{thm:interior}
Suppose all posteriors $\bpost(i, o)$ lie in the closed decision region~$\mathcal{R}_a$.
Then $\Delta\VoI(j \mid i, b) \geq 0$.
This holds for arbitrarily correlated channels $i$ and $j$.
\end{theorem}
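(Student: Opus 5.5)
Our plan is to apply Proposition~\ref{prop:decomposition} and show that the substitute force vanishes, so that only the non-negative complement force remains. By the decomposition,
\[
  \Delta\VoI(j \mid i, b) = \E\bigl[\Dg(\bpost, b)\bigr] - \E\bigl[\Dh(\bpost, b)\bigr],
\]
and $\E[\Dg(\bpost,b)] \ge 0$ by convexity of $g$. It therefore suffices to show $\E[\Dh(\bpost,b)] = 0$. Equivalently, by the martingale identity~\eqref{eq:bregman-jensen}, it suffices to show $\E[h(\bpost)] = h(b)$.

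First we place the prior in the region. The prior $b = \E[\bpost]$ is a convex combination of posteriors (with positive-probability outcomes as weights). Each posterior lies in $\mathcal{R}_a$, and $\mathcal{R}_a$ is convex (a polytope cut out by the linear inequalities $r_a\cdot b' \ge r_{a'}\cdot b'$, intersected with the simplex). Hence $b \in \mathcal{R}_a$.

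Next we use linearity of $h$ on the region. On the closed region $\mathcal{R}_a$ we have $h(b') = V(b') = r_a \cdot b'$. This holds on the boundary too, since $a$ attains the maximum there by definition of $\mathcal{R}_a$. Therefore
\[
  \E[h(\bpost)] = \E[r_a\cdot\bpost] = r_a\cdot\E[\bpost] = r_a\cdot b = h(b).
\]
The same conclusion can be reached pointwise. Choose the subgradient $\xi = r_a \in \partial h(b)$; this is valid because $h \ge r_a\cdot(\cdot)$ everywhere, with equality at $b$. Then $\Dh(\bpost,b) = r_a\cdot\bpost - r_a\cdot b - r_a\cdot(\bpost - b) = 0$ for every posterior. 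Since the paper notes that the identity holds for any subgradient choice, there is no ambiguity at kinks.

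Finally, nothing in this argument uses conditional independence of $i$ and $j$. The function $g$ enters only through its convexity, which holds for any fixed kernel of channel~$j$ as a function of the belief. The martingale property $\E[\bpost]=b$ concerns only channel~$i$. This gives the claim for arbitrarily correlated channels.

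The only delicate step is the boundary and subgradient issue: if $b$ sits on a face of $\mathcal{R}_a$, then $h$ is non-differentiable there. We handle this by the explicit choice $\xi = r_a$, or by avoiding Bregman terms altogether and using the direct linearity computation for $\E[h(\bpost)]-h(b)$. A secondary technicality is restricting to outcomes with positive probability, so that posteriors are well defined and the convex-combination argument applies.
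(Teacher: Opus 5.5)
Your proposal is correct and follows essentially the same route as the paper: place $b$ in $\mathcal{R}_a$ by convexity, use linearity of $h = V$ on the closed region together with $\E[\bpost] = b$ to make the Jensen gap of $h$ vanish, and conclude from Proposition~\ref{prop:decomposition} and the convexity of $g$. Your pointwise subgradient choice $\xi = r_a$ and your remark about restricting to positive-probability outcomes are harmless refinements of the same argument.
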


\begin{proof}
Since $\mathcal{R}_a$ is convex and $b = \sum_o P(o \mid b)\, \bpost(i, o)$, the prior $b$ also lies in $\mathcal{R}_a$.
Within $\mathcal{R}_a$, the value function satisfies $h(b') = r_a \cdot b'$ for all $b'$ in the region.
Because $h$ is linear on $\mathcal{R}_a$, $\E[h(\bpost)] = h(\E[\bpost]) = h(b)$ by the martingale property, so the Jensen gap of $h$ vanishes.
Meanwhile $g$ is convex, so $\E[\Dg(\bpost, b)] \geq 0$ by Jensen's inequality.
By Proposition~\ref{prop:decomposition},
\[
  \Delta\VoI(j \mid i, b) = \E[\Dg(\bpost, b)] - 0 \geq 0. \qedhere
\]
\end{proof}

\begin{remark}
\label{rem:irrelevant}
The proof above shows that when all posteriors remain in $\mathcal{R}_a$, channel~$i$ is decision-irrelevant: $\VoI(i \mid b) = 0$.
Despite being unable to improve the current decision on its own, channel~$i$ weakly amplifies the value of channel~$j$.
The mechanism is informational, not decisional: channel~$i$ may sharpen the posterior toward regions where $j$'s signals are more discriminating, even when no action change results from~$i$ alone.
(The hypothesis uses the closed region $\mathcal{R}_a$, so the Jensen-gap argument extends to boundary faces.)
\end{remark}

\paragraph{When Posteriors Cross A Decision Boundary.}

\begin{theorem}[Converse: substitution requires boundary-crossing]
\label{thm:converse}
If $\Delta\VoI(j \mid i, b) < 0$, then at least one posterior $\bpost(i, o)$ lies in a different
decision region from~$b$.
This holds for arbitrarily correlated channels $i$ and $j$.
\end{theorem}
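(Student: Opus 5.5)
The plan is to prove the contrapositive of Theorem~\ref{thm:converse} and reduce it to Theorem~\ref{thm:interior}. Suppose every posterior $\bpost(i, o)$ lies in the same decision region as $b$; we must show $\Delta\VoI(j \mid i, b) \geq 0$. The only real work is to fix what ``lies in a different decision region from~$b$'' means. Decision regions are closed and may overlap on their boundary faces, and $b$ itself may sit on such a face. I will read the negation of the conclusion as follows: there is an action $a$ with $b \in \mathcal{R}_a$ such that every posterior also lies in the closed region $\mathcal{R}_a$.

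With that reading, the argument is immediate. The hypothesis of Theorem~\ref{thm:interior} is met for this $a$, so $\Delta\VoI(j \mid i, b) \geq 0$, which contradicts $\Delta\VoI(j \mid i, b) < 0$. Hence some posterior lies outside every closed region containing $b$, or at least outside the particular one selected. Since Theorem~\ref{thm:interior} already covers arbitrarily correlated channels, so does this result, because conditional independence is never invoked.

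For robustness I would also recheck the argument through Proposition~\ref{prop:decomposition}. A negative $\Delta\VoI$ forces $\E[\Dh(\bpost, b)] > \E[\Dg(\bpost, b)] \geq 0$, so the substitute force is strictly positive. By the martingale identity~\eqref{eq:bregman-jensen}, this means $\E[V(\bpost)] > V(b)$. Take the subgradient $\xi = r_a$ at $b$, where $b \in \mathcal{R}_a$. Then $\Dh(\bpost, b) = V(\bpost) - r_a \cdot \bpost$, which vanishes exactly when $\bpost \in \mathcal{R}_a$. A positive expectation therefore forces some posterior with positive probability to satisfy $V(\bpost) > r_a \cdot \bpost$, that is, $\bpost \notin \mathcal{R}_a$. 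This holds for every $a$ with $b \in \mathcal{R}_a$, which gives the stronger form: some posterior lies outside each region containing $b$, although different regions may be witnessed by different posteriors.

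The main obstacle is this boundary bookkeeping when $b$ lies in several regions. One could worry that each posterior lies in some region shared with $b$, yet no single region contains all of them. In that case $b$ sits on a kink, the posteriors straddle it, and $\E[\Dh] > 0$ is genuinely possible. The statement must therefore be read per region: the conclusion is that for each action $a$ optimal at $b$, some posterior leaves $\mathcal{R}_a$. I would state this interpretation explicitly before giving the two-line reduction. Outcomes of zero probability should be excluded, since they do not affect either expectation.
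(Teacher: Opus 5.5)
Your proposal is correct and is essentially the paper's argument. The paper works directly through Proposition~\ref{prop:decomposition}: from $\Delta\VoI<0$ it gets $\E[\Dh(\bpost,b)]>0$, so $h$ cannot be linear on the convex hull of the posterior support, and the posteriors must span at least two decision regions. That is the contrapositive of Theorem~\ref{thm:interior}, and it coincides with your subgradient check using $\xi=r_a$.

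Your per-region reading is the right one, and the paper's own $b_3$ confirms it: each posterior of $i$ shares a region with $b_3$, yet $\Delta\VoI<0$. So delete the phrase ``some posterior lies outside every closed region containing $b$'' from your second paragraph, since it is false as written.
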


\begin{proof}
Since $g$ is convex (it is the expected maximum of linear functions),
Proposition~\ref{prop:decomposition} and $\Delta\VoI < 0$ imply
\[
  \E[\Dh(\bpost, b)] \;>\; \E[\Dg(\bpost, b)] \;\geq\; 0,
\]
so $\E[\Dh(\bpost, b)] > 0$.
By the martingale property~\eqref{eq:bregman-jensen},
$\E[\Dh(\bpost, b)] = \E[h(\bpost)] - h(b)$.
The linear correction term $\langle \xi, \E[\bpost] - b \rangle$ vanishes for any choice of subgradient $\xi \in \partial h(b)$, because $\E[\bpost] = b$.
So $\E[h(\bpost)] > h(b)$.
Since $h$ is PWLC, this strict inequality requires $h$ to be nonlinear on the convex hull
of the posterior support.
But $h$ is linear within each decision region, so the posterior support must span at least
two distinct decision regions.
\end{proof}

These two conditions leave a gap: a channel can cross a decision boundary and still produce complementarity, so long as the complement force $\E[\Dg]$ dominates the substitute force $\E[\Dh]$.
The worked example in the next section illustrates this.

%% file: chapters/illustration.tex
\section{Illustration}
\label{sec:example}

We illustrate all three results with a single example. The same two channels and the same reward structure produce three qualitatively different interaction regimes at three different beliefs: interior complementarity, boundary complementarity, and substitution. The key distinction is not whether decision boundaries are crossed, but \emph{which} boundaries each channel is relevant to. Consider $K = 3$ states and $|A| = 3$ actions with the parameters in Table~\ref{tab:rewards}.

The decision regions partition the 2-simplex: $\mathcal{R}_1$ (where $a_1$ is optimal) lies near the $s_1$ vertex, $\mathcal{R}_2$ near $s_2$, and $\mathcal{R}_3$ (the safe action) occupies a large central region near~$s_3$ (see Figure~\ref{fig:hero}, left, for the geometry).

\begin{table}[b]
\centering
\begin{tabular}{@{}lccc@{}}
\toprule
       & $s_1$ & $s_2$ & $s_3$ \\
\midrule
$a_1$  & 12    & 0     & 3     \\
$a_2$  & 0     & 12    & 3     \\
$a_3$  & 3     & 3     & 9     \\
\bottomrule
\end{tabular}
\qquad
\begin{tabular}{@{}lccc@{}}
\toprule
$P(o \mid s, i)$ & $s_1$ & $s_2$ & $s_3$ \\
\midrule
$o = 0$ & $3/4$ & $1/4$ & $1/4$ \\
$o = 1$ & $1/4$ & $3/4$ & $3/4$ \\
\bottomrule
\end{tabular}
\qquad
\begin{tabular}{@{}lccc@{}}
\toprule
$P(o \mid s, j)$ & $s_1$ & $s_2$ & $s_3$ \\
\midrule
$o = 0$ & $1/4$ & $3/4$ & $1/4$ \\
$o = 1$ & $3/4$ & $1/4$ & $3/4$ \\
\bottomrule
\end{tabular}
\vspace{1em}
\caption{Decision problem parameters. \textit{Left}: reward matrix $R(a, s)$; $a_3$ is a safe action, $a_1$ and $a_2$ specialize to $s_1$ and $s_2$. \textit{Middle, Right}: likelihood kernels for channels $i$ and $j$. Channel~$i$ partially identifies~$s_1$; channel~$j$ partially identifies~$s_2$. The channels are conditionally independent.}
\label{tab:rewards}
\end{table}

The left panel of Figure~\ref{fig:hero} shows $\Delta\VoI(j \mid i, b)$ across the belief simplex. Complements (red) fill region interiors; substitutes (blue) cluster along decision boundaries. The middle and right panels decompose the interaction: the complement force $\E[\Dg]$ spreads broadly across the simplex, while the substitute force $\E[\Dh]$ concentrates sharply along decision boundaries. The substitute force equals the expected regret of the currently optimal action at the posterior beliefs (Appendix~\ref{app:regret}). We walk through three marked beliefs that illustrate each main result.

\paragraph{Interior complementarity.}
Consider $b_1 = (1/11,\; 2/11,\; 8/11)$, in the interior of $\mathcal{R}_3$. Theorem~\ref{thm:interior} applies directly. Channel~$i$ is decision-irrelevant: applying Bayes' rule, both posteriors $(3/13,\, 2/13,\, 8/13)$ and $(1/31,\, 6/31,\, 24/31)$ remain in $\mathcal{R}_3$, so $\VoI(i) = 0$ and $\E[\Dh] = 0$ exactly. The substitute force vanishes, exactly as Theorem~\ref{thm:interior} predicts.

Yet channel~$j$ is useful: its $o{=}0$ posterior crosses into $\mathcal{R}_2$, giving $\VoI(j) = 3/44$. After observing the ``useless'' channel~$i$, $j$'s value rises to $15/176$. The mechanism: channel~$i$'s ``probably not $s_1$'' signal ($o{=}1$, probability $31/44$) concentrates the remaining uncertainty on ``$s_2$ vs $s_3$'', which is exactly what channel~$j$ resolves.

\paragraph{Boundary complementarity.}
Consider $b_2 = (1/4,\; 1/6,\; 7/12)$, in $\mathcal{R}_3$ but closer to the decision boundaries. Both channels cross boundaries, so Theorem~\ref{thm:interior}'s hypothesis is violated. The decomposition (Proposition~\ref{prop:decomposition}) reveals what happens beyond the theorem.

Both channels have positive VoI; each crosses a decision boundary. Yet they complement: $\VoI(j)$ is amplified $2.25\times$ by observing~$i$ first. The substitute force $\E[\Dh] = 11/16$ is strictly positive (the boundary is crossed), but the complement force $\E[\Dg] = 49/64$ dominates. Channel~$i$ crosses the $a_3/a_1$ boundary while channel~$j$ crosses $a_3/a_2$, \emph{different} boundaries, orthogonal uncertainties.

\paragraph{Substitution.}
Consider $b_3 = (5/12,\; 5/12,\; 1/6)$, on the $b_1 = b_2$ boundary. Both channels cross the \emph{same} boundary ($a_1/a_2$). Channel~$i$'s posteriors $(15/22,\, 5/22,\, 2/22)$ and $(5/26,\, 15/26,\, 6/26)$ land in $\mathcal{R}_1$ and $\mathcal{R}_2$ respectively, straddling the same $a_1/a_2$ boundary. After $o{=}0$ (a strong $s_1$ signal), $j$'s VoI collapses to~$0$: channel~$i$ has resolved the very decision that~$j$ was relevant to. Table~\ref{tab:results} collects the numerical values at all three beliefs.

\begin{table}[b]
\centering
\small
\begin{tabular}{@{}lccc@{}}
\toprule
 & $b_1 = (\tfrac{1}{11},\, \tfrac{2}{11},\, \tfrac{8}{11})$
 & $b_2 = (\tfrac{1}{4},\, \tfrac{1}{6},\, \tfrac{7}{12})$
 & $b_3 = (\tfrac{5}{12},\, \tfrac{5}{12},\, \tfrac{1}{6})$ \\
 & {\scriptsize interior of $\mathcal{R}_3$}
 & {\scriptsize near boundary}
 & {\scriptsize on $a_1/a_2$ boundary} \\
\midrule
$\VoI(i)$            & $0$       & $11/16$  & $5/2$    \\
$\VoI(j)$            & $3/44$    & $1/16$   & $5/2$    \\
$\VoI(j \mid i)$     & $15/176$  & $9/64$   & $3/32$   \\
\addlinespace
$\E[\Dg]$            & $3/176$   & $49/64$  & ---      \\
$\E[\Dh]$            & $0$       & $11/16$  & ---      \\
\addlinespace
$\Delta\VoI$         & $+3/176$  & $+5/64$  & $-77/32$ \\
\bottomrule
\end{tabular}
\vspace{1em}
\caption{Summary of information values at three marked beliefs. $b_1$: interior complementarity (Theorem~\ref{thm:interior}); $b_2$: boundary complementarity (Proposition~\ref{prop:decomposition}); $b_3$: substitution (Theorem~\ref{thm:converse}). Entries marked~--- indicate that the individual forces are not separately defined at a boundary belief where $V$ has a kink.}
\label{tab:results}
\end{table}

\begin{remark}[PWLC specificity]
\label{rem:pwlc}
The vanishing of $\E[\Dh]$ at $b_1$ relies on $V$ being linear within $\mathcal{R}_3$. This is specific to finite-action problems where decision regions create flat patches in the value function.
\end{remark}

\paragraph{Phase transition.}
Figure~\ref{fig:ray} traces $\Delta\VoI$ along the ray $b(t) = (1/4+t,\; 1/6,\; 7/12-t)$, which walks from $b_2$ toward the $a_1/a_3$ decision boundary. Both forces grow as the boundary approaches, but the substitute force overtakes the complement force at the interaction boundary ($t \approx 0.10$), before the decision boundary itself is reached ($t = 7/60 \approx 0.117$). At $b_1$, action~$a_3$ remains optimal at both of~$i$'s posteriors, so regret is zero and $\E[\Dh] = 0$. At $b_3$, channel~$i$ pushes the belief into $\mathcal{R}_1$ or $\mathcal{R}_2$, where the prior-optimal action incurs large regret, and the substitute force dominates.

\begin{figure}[t]
\centering
\includegraphics[width=0.85\textwidth]{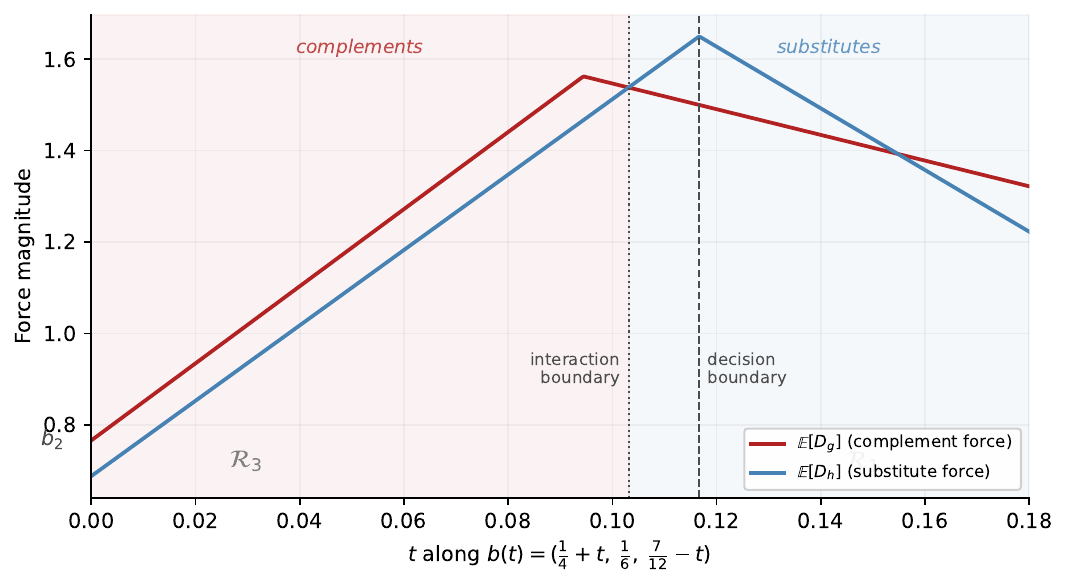}
\caption{Phase transition along $b(t) = (1/4+t,\, 1/6,\, 7/12-t)$. Complement force (red) and substitute force (blue). Shading indicates the complement (red) and substitute (blue) regimes; the interaction boundary (where the forces cross) precedes the decision boundary.}
\label{fig:ray}
\end{figure}

The example illustrates the gap between the necessary and sufficient conditions.

%% file: chapters/related.tex
\section{Related Work}
\label{sec:related}

\paragraph{Information substitutes and complements.}
\citet{ChenWaggoner2017} introduced both global and pointwise notions of information substitutes and complements, including the pointwise interaction term $\Delta\VoI(j \mid i, b)$ that we study.
Their results characterize global conditions, holding across all beliefs, for the two regimes.
Our contribution resolves the local geometry: we identify \emph{which beliefs} support each regime and show that the sign of $\Delta\VoI$ is determined by whether channel~$i$'s posteriors cross a decision boundary.
\citet{Borgers2013} gave a universal characterization via Blackwell comparisons, requiring complementarity or substitutability to hold simultaneously across all decision problems.
Our results are pointwise-in-belief for a fixed decision problem; the global ordering of \citeauthor{Borgers2013} and our local geometry describe complementary facets of the same relationship.
\citet{BrooksFrankelKamenica2024} characterize signal comparisons via a \emph{reveal-or-refine} condition: one signal dominates another in value regardless of the agent's access to other information if and only if every realization of the dominant signal either reveals the state or refines a realization of the dominated signal.

\paragraph{Geometry of the value function and single-channel VoI.}
Several papers exploit the key geometric fact underlying our results: the value function $V$ is linear within each decision region, so VoI vanishes for channels whose posteriors remain in the current region.
\citet{RadnerStiglitz1984} and \citet{ChadeSchlee2002} used this linearity to establish the first-order nonconcavity of VoI with respect to a single channel's precision.
\citet{DeLaraGossner2020} study first-order VoI through convex duality and show that VoI is zero whenever posteriors stay in the interior of the current decision region, the first-order analogue of our Theorem~\ref{thm:interior}.
\citet{Whitmeyer2024} applies the same decision-region linearity to recover a concavity result for single-channel VoI as a function of information quantity.
\citet{KeppoMoscariniSmith2008} extend the \citeauthor{RadnerStiglitz1984} nonconcavity to continuous precision choice, showing that demand for information can be discontinuous and non-monotone.
\citet{AtheyLevin2018} show that in monotone decision problems, where actions are ordered and payoffs supermodular, VoI respects the Blackwell order (a setting where our decision-region geometry reduces to a single threshold).
These papers all study VoI as a function of a single channel's precision; we extend the same geometric engine to the second-order cross-channel interaction $\Delta\VoI(j \mid i, b)$.

\paragraph{Bregman divergences and Bayesian persuasion.}
The martingale identity relating Jensen gaps to expected Bregman divergences was introduced to economic theory by \citet{KamenicaGentzkow2011}, where it writes single-channel VoI as a single expected Bregman divergence.
Applying the same identity to the cross-channel interaction $\Delta\VoI$ instead yields a difference of two such terms, one for each auxiliary value function, which is what gives the decomposition in Proposition~\ref{prop:decomposition} its complement-force-minus-substitute-force structure.
\citet{Chodrow2025} establishes that Bregman divergences are the unique divergences with this information-equivalence property: they are characterized by the agreement between Jensen-gap and divergence-based notions of information content for arbitrary weighted collections, providing an axiomatic basis for the decomposition.
\citet{Mu2021} show that in large samples Blackwell dominance is equivalent to comparisons of R\'{e}nyi divergences, connecting signal ordering to a related family of divergence measures.
\citet{DentiMarinacciRustichini2022} and \citet{Pomatto2023} axiomatize information costs from complementary directions: the former identifies posterior-separable costs (structurally linked to Bregman divergences) as the class consistent with experimental rationality, while the latter characterizes costs with constant marginal returns as those based on R\'{e}nyi entropy.
\citet{Gossner2021} study attention allocation under a posterior-separable capacity constraint, where the geometry of the decision problem determines which states the agent learns to distinguish.

\paragraph{Sequential acquisition and complementarity in dynamic settings.}
\citet{LiangMu2020} study sequential social learning where complementarity between information sources can create learning traps: agents' posteriors evolve into regions where a second source has low marginal value, causing the community to stop acquiring it prematurely.
The mechanism driving their traps is the same boundary-crossing geometry we formalize, where a source is complementary at beliefs whose posteriors do not yet straddle a decision boundary.
Our localization results provide the single-step geometric account of when and why those traps arise.
\citet{GolovinKrause2011} established conditions for global adaptive submodularity of information acquisition policies.
Our results localize that structure: substitution and complementarity depend on the agent's current position in belief space relative to decision boundaries, and global properties emerge from averaging over this local geometry.
\citet{Che2019} study optimal dynamic attention allocation, showing that the optimal policy shifts from diversification to concentration as beliefs sharpen.
\citet{LiangMuSyrgkanis2022} extend the \citeauthor{LiangMu2020} framework to settings with multiple heterogeneous sources, characterizing when delayed aggregation is optimal.

%% file: chapters/conclusion.tex
\section{Conclusion}
\label{sec:discussion}

The localization principle established in this paper says that information sources compete only when one source's posteriors straddle a decision boundary.
Far from any boundary, sources always weakly help each other; substitution is a boundary phenomenon.

\paragraph{Applications.}
The localization principle provides a geometric criterion for information redundancy: two sources substitute only if they resolve the same decision boundary, and sources resolving different boundaries complement each other.
In sequential testing problems (multi-arm clinical trials, multiclass triage), a decision-maker must choose which source to consult next.
Our results yield a concrete heuristic: after observing a source that resolves boundary~$X$, sources targeting a different boundary~$Y$ retain or increase their value (Theorem~\ref{thm:interior}), while sources targeting the same boundary~$X$ lose value.
This localizes the adaptive submodularity conditions of \citet{GolovinKrause2011}, which treat diminishing returns globally, to specific beliefs relative to decision boundaries.
In active learning for multiclass classification, $K$ class labels and a finite classifier family produce exactly the piecewise-linear value structure we study.
Feature redundancy corresponds to resolving the same classification boundary; a boundary-aware selection criterion would complement mutual-information-based approaches, which are agnostic to which boundary a feature resolves.

\paragraph{Limitations.}
Three modeling assumptions bound the scope.
First, the economic interpretation of the complement and substitute forces requires conditional independence between channels, which ensures that $g(\bpost(i, o_i))$ equals the actual expected post-$j$ value; the decomposition and localization theorems themselves hold for arbitrarily correlated channels.
Second, the results are specific to finite-action settings where the value function is piecewise-linear.
Continuous actions yield a smooth, strictly convex~$V$ with no flat decision regions, so the interior complementarity mechanism (Theorem~\ref{thm:interior}) has no direct analogue; the boundary-crossing logic is specific to discrete decisions.
Third, the decomposition requires the Bayesian martingale property $\E[\bpost] = b$; agents with miscalibrated updates break this identity.

\paragraph{Open questions.}
The gap between our necessary and sufficient conditions calls for a tighter characterization: the worked example (Section~\ref{sec:example}) suggests that channels crossing \emph{different} boundaries complement while channels crossing the \emph{same} boundary substitute, but a precise geometric criterion remains to be formulated.
The interaction is generically asymmetric (channel~$i$ may substitute for~$j$ while $j$ complements~$i$), and characterizing when this asymmetry arises would clarify optimal ordering in sequential acquisition.
Quantifying how the complementarity region scales with~$K$ is a natural direction.

%% file: chapters/appendix.tex
\appendix

\section{Piecewise-Linear Structure}
\label{app:polyhedral}

\begin{theoremA}[Piecewise-linearity of the interaction]
\label{prop:polyhedral}
For finite $|A|$, $|O_i|$, $|O_j|$, and $K$, the function $b \mapsto \Delta\VoI(j \mid i, b)$ is piecewise-linear on $\Delta^{K-1}$.
\end{theoremA}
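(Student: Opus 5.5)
The plan is to rewrite every term of $\Delta\VoI(j \mid i, b)$ so that it no longer divides by outcome probabilities. Each term then becomes a finite sum of maxima of linear functions of $b$, and such functions are piecewise-linear. The device is the positive homogeneity of $V$. Extend $V$ to the nonnegative orthant by $V(x) = \max_a r_a \cdot x$; then $V(\lambda x) = \lambda V(x)$ for all $\lambda \ge 0$.

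\textbf{First term.} Write $P(o \mid b, i) = \sum_s P(o\mid s,i) b(s)$. The posterior is $\bpost(i,o) = u_o(b)/P(o\mid b,i)$, where $u_o(b)(s) = P(o\mid s,i)\,b(s)$ is linear in $b$. Homogeneity then gives
\[
  \E_{o}\bigl[V(\bpost(i,o))\bigr] = \sum_{o \in O_i} V\bigl(u_o(b)\bigr).
\]
Outcomes with $P(o\mid b,i)=0$ contribute zero on both sides, so the boundary cases are handled. The same computation applies to $g$:
\[
  g(b) = \sum_{o_j \in O_j} \max_a \sum_s R(a,s)\, P(o_j\mid s,j)\, b(s).
\]
This is a finite sum of maxima of linear functions of $b$. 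It is therefore convex and piecewise-linear, and so is $h = V$. Hence $\VoI(j\mid b) = g(b) - h(b)$ is piecewise-linear.

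\textbf{Second term.} The crux is $\E_{o_i}[\VoI(j \mid \bpost(i,o_i))]$. Here $g$ is composed with the nonlinear map $b \mapsto \bpost(i,o)$, so the composition alone does not show piecewise-linearity. I would handle it by homogeneity. Extend $g$ to the orthant by the displayed formula, which is positively homogeneous of degree one, being a sum of such maxima. Then
\[
  P(o\mid b,i)\, g(\bpost(i,o)) = g(u_o(b)) = \sum_{o_j} \max_a \sum_s R(a,s)\,P(o_j\mid s,j)\,P(o\mid s,i)\,b(s).
\]
This is again a sum of maxima of linear functions of $b$. Summing over $o \in O_i$ shows that $\E_{o_i}[g(\bpost(i,o_i))]$ is convex piecewise-linear. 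The subtracted part $\E_{o_i}[h(\bpost(i,o_i))] = \sum_o V(u_o(b))$ was already treated above.

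\textbf{Conclusion.} $\Delta\VoI(j\mid i,b)$ is a signed finite combination of convex PWL functions of $b$, each a maximum of finitely many affine functions. A difference of such functions is piecewise-linear with finitely many polyhedral pieces: take the common refinement of the cells on which each maximum is attained by a fixed index. This holds on all of $\Delta^{K-1}$.

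The main obstacle is justifying the homogeneous extension of $g$ when $\bpost(i,o)$ is not a probability vector. Correlation between $i$ and $j$ is irrelevant here, because the definition uses only the marginal kernel $P(\cdot\mid s,j)$. I would fix this by defining $g$ on the orthant directly through the explicit max-sum formula, checking that it agrees with \eqref{eq:g} on the simplex, and checking that it vanishes together with $P(o\mid b,i)$.
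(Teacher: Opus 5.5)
Your proof is correct and follows the paper's argument. Both use positive homogeneity to cancel the normalizer $P(o\mid b)$, so that every term of $\Delta\VoI$ becomes a finite sum of maxima of linear functions of $b$, and then conclude that the signed combination $F_{ij}-F_i-F_j+V$ is piecewise-linear. The one difference is the cross term $\E_{o_i}[g(\bpost(i,o_i))]$: you handle it directly through homogeneity of the extended $g$, whereas the paper identifies it with the joint channel's value using conditional independence, so your version also covers correlated channels.
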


\begin{proof}
For any channel~$k$ with outcome set~$O_k$, the unnormalized posterior-value product satisfies
\[
  P(o \mid b)\, V(\bpost(k, o))
  \;=\; \max_{a \in A}\; \sum_s R(a,s)\, P(o \mid s, k)\, b(s),
\]
where the normalizing denominator $P(o \mid b)$ cancels against the posterior formula.
The right-hand side is the maximum of $|A|$ functions, each linear in~$b$.
Summing over outcomes,
\[
  \E_o\bigl[V(\bpost(k, o))\bigr]
  \;=\;
  \sum_{o \in O_k} \max_{a \in A}\; \sum_s R(a,s)\, P(o \mid s, k)\, b(s),
\]
a sum of piecewise-linear convex (PWLC) functions of~$b$.
By the same cancellation applied to the joint channel $(i,j)$ under conditional independence, and to channels $i$ and $j$ individually, the interaction decomposes as
\[
  \Delta\VoI(j \mid i, b)
  \;=\; F_{ij}(b) - F_i(b) - F_j(b) + V(b),
\]
where each term is a sum of maxima of finitely many linear functions of~$b$.
Each term is therefore PWLC, and their sum is piecewise-linear.
\end{proof}

\section{Regret Interpretation}
\label{app:regret}

In the interior of decision region $\mathcal{R}_a$, where $h(b') = r_a \cdot b'$, the Bregman divergence of $h$ admits a clean interpretation:
\[
  \Dh(b', b)
  \;=\; V(b') - r_a \cdot b'
  \;=\; \operatorname{Regret}(a^*(b),\, b'),
\]
where $\operatorname{Regret}(a, b') = V(b') - r_a \cdot b'$ is the regret of committing to action~$a$ at belief~$b'$.
The substitute force $\E[\Dh(\bpost, b)]$ is therefore the expected regret of sticking with the currently optimal action at the posterior belief.

This connects the Bregman decomposition to a decision-theoretic quantity: substitution occurs when channel~$i$ resolves enough uncertainty that the agent's current action has low expected regret, leaving little room for channel~$j$ to improve the decision.

\section{Lean Formalization}
\label{app:lean}

The Lean~4 source files provide machine-verified proofs of all results.
The formalization uses the \emph{Jensen gap} $\E[f(X)] - f(\E[X])$ rather than explicit Bregman divergences, and abstracts the paper's geometric hypotheses (e.g., posteriors remain in the same decision region) to their algebraic consequences (\texttt{jensenGap h = 0} and \texttt{jensenGap g $\geq$ 0}).
The geometric conditions imply the algebraic ones but not conversely, so the Lean statements are strictly more general.
In particular, the algebraic conditions do not encode conditional independence.

\begin{center}
\footnotesize
\begin{tabular}{@{}ll@{}}
\toprule
Paper & Lean \\
\midrule
Belief $b \in \Delta^{K-1}$ & \texttt{Belief S} \\
Channel likelihood & \texttt{ObsKernel S O} \\
$P(o \mid b)$ & \texttt{marginalProb b k o} \\
$\bpost(i,o)$ & \texttt{posterior b k o hmarg} \\
$\E[D_f(\bpost, b)]$ & \texttt{jensenGap f b k hmarg} \\
\midrule
Prop.~\ref{prop:decomposition} & \texttt{bregman\_decomposition} \\
Cor.~\ref{cor:cost} & \texttt{jensenGap\_add\_const} \\
Martingale property & \texttt{jensenGap\_linear} \\
Thm.~\ref{thm:interior} & \texttt{interior\_complementarity} \\
Thm.~\ref{thm:converse} & \texttt{substitution\_requires\_boundary\_crossing} \\
Jensen's inequality & \texttt{jensenGap\_nonneg\_of\_convexOn} \\
Regret $\geq 0$ & \texttt{regret\_nonneg} \\
Thm.~\ref{prop:polyhedral} cancellation & \texttt{unnorm\_posterior\_value} \\
Thm.~\ref{prop:polyhedral} PWLC expansion & \texttt{expected\_value\_as\_sup\_linear} \\
\bottomrule
\end{tabular}
\end{center}